\providecommand{\algorithmname}{Algorithm}
\let\oldforeign@language\foreign@language
\DeclareRobustCommand{\foreign@language}[1]{%
	\lowercase{\oldforeign@language{#1}}}
\let\oldforeign@language\foreign@language
\DeclareRobustCommand{\foreign@language}[1]{%
	\lowercase{\oldforeign@language{#1}}}
\newcommand{\MYfooter}{\smash{
		\hfil\parbox[t][\height][t]{\textwidth}{\centering
			\thepage}\hfil\hbox{}}}
\def\ps@IEEEtitlepagestyle{%
	\def\@oddhead{\parbox[t][\height][t]{\textwidth}{\centering \scriptsize
			Personal use of this material is permitted. Permission from the author(s) and/or copyright holder(s), must be obtained for all other uses. Please contact us and provide details if you believe this document breaches copyrights.\\
			\noindent\makebox[\linewidth]{}
		}\hfil\hbox{}}%
	\def\@evenhead{\scriptsize\thepage \hfil \leftmark\mbox{}}%
	\def\@oddfoot{\parbox[t][\height][l]{\textwidth}{
			\vspace{-20pt}{\rule{\textwidth}{0.4pt}}\\ \footnotesize\underline{To cite this article:}
			{\bf{\footnotesize\textcolor{red}{H. A. Hashim, "Asymptotically Stable Observer-based Controller for Attitude Tracking with Systematic Convergence," In Proc. of the IEEE 25th International Conference on System Theory, Control and Computing (ICSTCC), Iasi, Romania, 2021, pp. 39-45.}}} doi: \href{https://doi.org/10.1109/ICSTCC52150.2021.9607198}{10.1109/ICSTCC52150.2021.9607198}\\
			\noindent\makebox[\linewidth]
		}\hfil\hbox{}}%
	\def\@evenfoot{\MYfooter}}
\newtheorem{thm}{Theorem}
\newtheorem{rem}{Remark}
\newtheorem{assum}{Assumption}
\begin{document}
	\bstctlcite{IEEEexample:BSTcontrol}

	\title{Asymptotically Stable Observer-based Controller for Attitude Tracking with Systematic Convergence}

	\author{Hashim A. Hashim
		\thanks{This work was supported in part by Thompson Rivers University Internal research fund \# 102315.}
	\thanks{$^*$Corresponding author, H. A. Hashim is with the Department of Engineering and Applied Science, Thompson Rivers University, Kamloops, British Columbia, Canada, V2C-0C8, e-mail: hhashim@tru.ca.}}

	
	
	\maketitle
	
	\begin{abstract}
This paper proposes a novel unit-quaternion observer-based controller
for attitude tracking (attitude and angular velocity) with guaranteed
transient and steady-state performance. The proposed approach is computationally
cheap and can operate based on measurements provided, for instance by a typical
low-cost inertial measurement unit (IMU) or magnetic, angular rate, and gravity (MARG) sensor without the knowledge of angular velocity. First, an observer evolved on $\mathbb{S}^{3}\times\mathbb{R}^{3}$
is developed guaranteeing asymptotic stability of the closed loop
error signals starting from any initial condition. Afterwords, the
observer is combined with the proposed controller such that the observer-based
controller ensures asymptotic stability of the closed loop error signals
starting from any initial condition. Simulation performed in discrete
form at low sampling rate reveals the robustness and effectiveness
of the proposed approach.
	\end{abstract}
	
	\begin{IEEEkeywords}
		Observer-based controller, attitude, estimation, control, MARG, IMU, asymptotic stability.
	\end{IEEEkeywords}

	\IEEEpeerreviewmaketitle{}

	\section{Introduction}
	
	\IEEEPARstart{A}{ttitude} tracking is a fundamental part of a variety of robotics applications
	including space telescopes, unmanned aerial vehicles, rotating radars
	and others. Development of cheap, small-sized, low-weight, and power-efficient
	inertial measurement units (IMUs) sparked a wave of active research
	in the area of attitude observation and tracking (observer + control)
	\cite{salcudean1991globally,hashim2020SO3Wiley,lizarralde1996attitude,mayhew2011quaternion,hashim2021_COMP_ENG_PRAC,hashim2018SO3Stochastic,castillo2019disturbance,Hashim2021AESCTE,cui2019velocity,hashim2021ACC,hashim2021T_SMCS_SLAM}.
	The main challenge of working with low-cost sensors, such as IMUs
	is their susceptibility to noise. Also, the true attitude dynamics
	rely on angular velocity of a rigid-body, commonly measured by a gyroscope.
	Nonetheless, replacement of a failed gyroscope often proves to be
	challenging and costly \cite{Nasa2018Hubble}. Hence, there is a need
	for effective attitude tracking solutions that do not require knowledge
	of angular velocity. 
	
	Velocity-free attitude control is possible with a full-state observer
	able to provide accurate estimates of both attitude and angular velocity.
	Thereafter, an attitude tracking control based solely on available
	attitude and angular velocity estimates is developed. It is worth
	mentioning that angular velocity is observable only given the knowledge
	of attitude. In turn, in order to obtain attitude information it is
	sufficient to acquire at least two vectorial measurements at the rigid-body
	using, for example, an IMU module \cite{hashim2020SO3Wiley,shuster1981three,hashim2018SO3Stochastic,odry2018kalman,odry2021open}.
	An early solution presented a full state observer for rigid-body motion
	\cite{salcudean1991globally}. The problem of velocity-free attitude
	tracking has been addressed in the literature in a variety of ways
	including a full-state observer for rigid-body motion \cite{salcudean1991globally},
	an observer-based controller with local exponential stability \cite{caccavale1999output},
	attitude tracking control without velocity measurements \cite{costic2001quaternion},
	and a hybrid attitude tracking controller with semi-global asymptotic
	stability which has a switching observer that restores angular velocity
	signal \cite{mayhew2011quaternion}. Additionally, an observer-based
	controller for the attitude tracking problem has been formulated to
	handle unknown bounded external disturbances \cite{xia2010attitude,tognon2017dynamics}.
	
	Despite a multitude of existing solutions, the common issue all of
	them share is the inability to guarantee the attitude error transient
	and steady-state performance. Lack of certainty and predictability
	in the performance of attitude observation and the control errors
	can easily destabilize the entire process. However, full control over
	the transient and steady-state performance can be gained by employing
	a prescribed performance function (PPF) \cite{bechlioulis2008robust}.
	PPF is able to guide the error to initiate within a large set and
	reduce systematically to settle within a small set. PPF approach has
	been successfully utilized for attitude-related problems, for example,
	an observer-based controller for attitude tracking problem subject
	to actuator saturation \cite{zhang2019observer} and output feedback
	of attitude problem subject to external disturbances \cite{yin2019appointed}.
	The work in \cite{zhang2019observer} used angle-axis, whereas the
	work in \cite{yin2019appointed} considered Rodriguez parameters for
	attitude parameterization. Both angle-axis and Rodriguez parameters
	approaches for attitude parameterization are subject to singularity.
	Moreover, the overall closed loop signals of \cite{zhang2019observer,yin2019appointed}
	are shown to be semi-globally uniformly ultimately bounded, and therefore,
	the asymptotic stability cannot be guaranteed. 
	
	Considering the above literature overview, it becomes apparent that
	in order to achieve a stable attitude tracking process and alleviate
	the need for angular velocity information, observer-based control
	solutions with guaranteed measures of transient and steady-state performance
	of attitude error should be developed. Thus, the main contributions
	of this work are: 1) an attitude and angular velocity observer developed
	on $\mathbb{S}^{3}\times\mathbb{R}^{3}$ guaranteeing almost global
	asymptotic stability with predefined measures of transient and steady-state
	performance of attitude error is proposed. 2) The estimates of attitude
	and angular velocity obtained by the observer are combined with a
	novel attitude tracking control law that ensures almost global asymptotic
	stability with guaranteed measures of transient and steady-state performance
	of attitude error. 3) The proposed solutions produce accurate results
	even when supplied with uncertain measurements obtained from a low-cost
	IMU module at low sampling rate.
	
	The rest of the article is organized as follows: Section \ref{sec:Attit_Problem-Formulation}
	introduces the math notation, unit-quaternion preliminaries, attitude
	dynamics, available measurements, and attitude error. Section \ref{sec:Systematic-Performance}
	presents the concept of PPF. Section \ref{sec:Attit_Observer_Controller}
	introduces the observer-based controller for the attitude tracking
	problem. Section \ref{sec:SE3_Simulations} demonstrates the robustness
	of the proposed approach through numerical results. Finally, Section
	\ref{sec:SE3_Conclusion} summarizes the work.
	
	\section{Problem Formulation\label{sec:Attit_Problem-Formulation}}
	
	\subsection{Preliminaries}
	
	Let $\mathbb{R}$ and $\mathbb{R}^{n\times m}$ denote a set of real
	numbers and a real $n$-by-$m$ dimensional space, respectively. For
	$x\in\mathbb{R}^{n}$, $||x||=\sqrt{x^{\top}x}$ denotes the Euclidean
	norm. $\left\{ \mathcal{I}\right\} $ and $\{\mathcal{B}\}$ correspond
	to fixed inertial-frame and body-frame, respectively. $R\in\mathbb{SO}\left(3\right)$
	describes rigid-body's attitude (orientation) where \cite{hashim2018SO3Stochastic,hashim2020SO3Wiley}
	\[
	\mathbb{SO}(3)=\{R\in\mathbb{R}^{3\times3}|RR^{\top}=R^{\top}R=\mathbf{I}_{3}\text{, }{\rm det}(R)=+1\}
	\]
	with ${\rm det}(\cdot)$ being a determinant. Let $Q=[q_{0},q^{\top}]^{\top}\in\mathbb{S}^{3}$
	stand for a unit-quaternion vector where $q_{0}\in\mathbb{R}$ and
	$q\in\mathbb{R}^{3}$ such that $\mathbb{S}^{3}=\{\left.Q\in\mathbb{R}^{4}\right|||Q||=\sqrt{q_{0}^{2}+q^{\top}q}=1\}$.
	$[\Omega]_{\times}$ represents a skew symmetric matrix {\small{}with
		\begin{align*}
		\left[\Omega\right]_{\times} & =\left[\begin{array}{ccc}
		0 & -\Omega_{3} & \Omega_{2}\\
		\Omega_{3} & 0 & -\Omega_{1}\\
		-\Omega_{2} & \Omega_{1} & 0
		\end{array}\right],\hspace{1em}\Omega=\left[\begin{array}{c}
		\Omega_{1}\\
		\Omega_{2}\\
		\Omega_{3}
		\end{array}\right]
		\end{align*}
	}Note that $[\Omega]_{\times}y=\Omega\times y$ for all $\Omega,y\in\mathbb{R}^{3}$.
	Let $Q^{-1}=[q_{0},-q^{\top}]^{\top}\in\mathbb{S}^{3}$ be the inverse
	of $Q$. For $Q_{1}=[\begin{array}{cc}
	q_{01} & q_{1}^{\top}\end{array}]^{\top}\in\mathbb{S}^{3}$ and $Q_{2}=[\begin{array}{cc}
	q_{02} & q_{2}^{\top}\end{array}]^{\top}\in\mathbb{S}^{3}$, the quaternion product is as follows:
	\[
	Q_{1}\odot Q_{2}=\left[\begin{array}{c}
	q_{01}q_{02}-q_{1}^{\top}q_{2}\\
	q_{01}q_{2}+q_{02}q_{1}+[q_{1}]_{\times}q_{2}
	\end{array}\right]\in\mathbb{S}^{3}
	\]
	For $Q=[q_{0},q^{\top}]^{\top}\in\mathbb{S}^{3}$, the related unit-quaternion
	mapping from $\mathbb{S}^{3}$ to $\mathbb{SO}\left(3\right)$ is
	as follows:
	\begin{align}
	\mathcal{R}_{Q} & =(q_{0}^{2}-||q||^{2})\mathbf{I}_{3}+2qq^{\top}+2q_{0}\left[q\right]_{\times}\in\mathbb{SO}\left(3\right)\label{eq:Attit_map_Q_SO3}
	\end{align}
	In view of \eqref{eq:Attit_map_Q_SO3}, the quaternion identity and
	its related mapping to $\mathbb{SO}\left(3\right)$ are defined by
	\begin{equation}
	\mathbf{Q}_{{\rm I}}=[\pm1,0,0,0]^{\top}\hspace{1em}\Leftrightarrow\hspace{1em}\mathcal{R}_{\mathbf{Q}_{{\rm I}}}=\mathbf{I}_{3}\label{eq:Attit_Ident_Q_SO3}
	\end{equation}
	For more details, visit \cite{hashim2019AtiitudeSurvey,shuster1993survey}.
	For any $\Omega\in\mathbb{R}^{3}$ and $Q\in\mathbb{S}^{3}$, the
	following maps are considered:
	\begin{equation}
	\begin{cases}
	\overline{\Omega} & =[0,\Omega^{\top}]^{\top}\in\mathbb{R}^{4}\\
	\Gamma(\Omega) & =\left[\begin{array}{cc}
	0 & -\Omega^{\top}\\
	\Omega & -\left[\Omega\right]_{\times}
	\end{array}\right]\in\mathbb{R}^{4\times4}
	\end{cases}\label{eq:Attit_Map1}
	\end{equation}

	\subsection{Measurements and Dynamics}
	
	Let $Q\in\mathbb{S}^{3}$ and $\Omega\in\mathbb{R}^{3}$ denote the
	true unit-quaternion and angular velocity of a rigid-body in 3D space,
	respectively as depicted in Fig. \ref{fig:SO3_PPF_STCH_1}.
	\begin{figure}[h]
		\centering{}\includegraphics[scale=0.33]{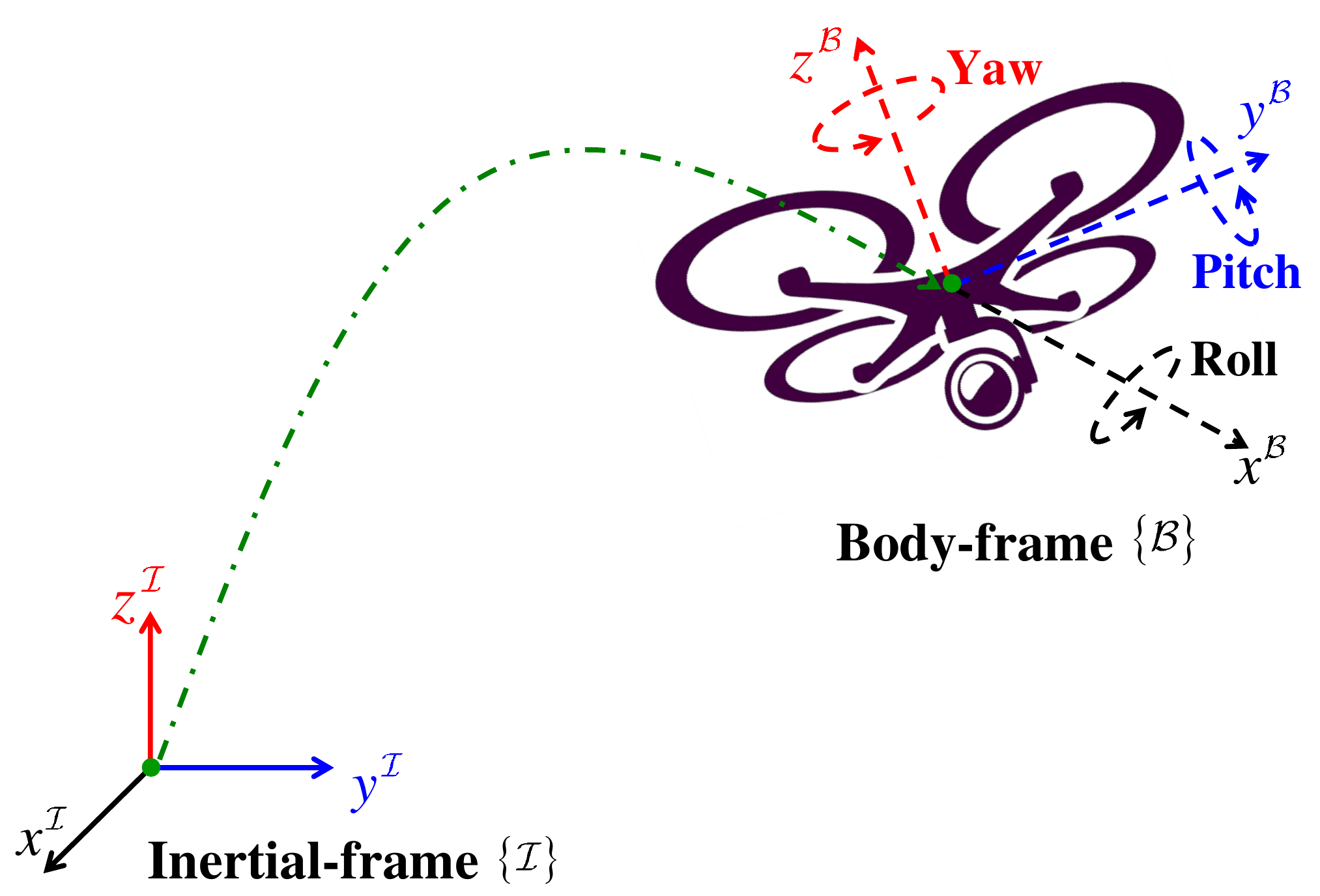}\caption{\label{fig:SO3_PPF_STCH_1} Rigid-body's orientation (attitude) in
			body-frame $\{\mathcal{B}\}$ relative to inertial-frame $\left\{ \mathcal{I}\right\} $.}
	\end{figure}
	The true attitude dynamics on $\mathbb{SO}(3)$ are
	\begin{equation}
	\dot{R}=R\left[\Omega\right]_{\times},\hspace{1em}J\dot{\Omega}=\left[J\Omega\right]_{\times}\Omega+\tau\label{eq:Attit_True_Dyn_SO3}
	\end{equation}
	The equivalent unit-quaternion representation is as follows:
	\begin{equation}
	\begin{cases}
	\dot{Q} & =\frac{1}{2}Q\odot\overline{\Omega}=\frac{1}{2}\Gamma(\Omega)Q\\
	J\dot{\Omega} & =\left[J\Omega\right]_{\times}\Omega+\tau
	\end{cases}\label{eq:Attit_True_Dyn_Q}
	\end{equation}
	with $J$ being a positive-definite symmetric inertia matrix of a
	rigid-body, and $\tau\in\mathbb{R}^{3}$ being input torque. Note
	that $R,Q,\Omega,\tau,J\in\{\mathcal{B}\}$. A typical low-cost IMU
	module collects measurements using a 3-axis magnetometer and a 3-axis
	accelerometer at the body-frame $\{\mathcal{B}\}$ which can be expressed
	as follows:
	\begin{equation}
	\overline{b_{i}}=Q^{-1}\odot\overline{r_{i}}\odot Q+\overline{n_{i}}\in\mathbb{R}^{4},\hspace{1em}\forall i=1,2,\ldots,n\label{eq:Attit_Vect_True}
	\end{equation}
	where $\overline{r_{i}}=[0,r_{i}^{\top}]^{\top}\in\mathbb{R}^{4}$
	is the $i$th known observation in the inertial frame $\{\mathcal{I}\}$
	and $n_{i}$ is unknown noise. The vectors in \eqref{eq:Attit_Vect_True}
	are normalized as follows:
	\begin{equation}
	\mathfrak{r}_{i}=r_{i}/||r_{i}||,\hspace{1em}\mathfrak{b}_{i}=b_{i}/||b_{i}||\label{eq:Attit_Vector_norm}
	\end{equation}
	Let $M^{\mathcal{I}}=[\mathfrak{r}_{1},\mathfrak{r}_{2},\ldots,\mathfrak{r}_{n}]$
	and $M^{\mathcal{B}}=[\mathfrak{b}_{1},\mathfrak{b}_{2},\ldots,\mathfrak{b}_{n}]$.
	\begin{rem}
		\label{rem:Rem1_Attit_Rec}\cite{hashim2018SO3Stochastic,hashim2020SO3Wiley}
		The attitude can be reconstructed if (${\rm rank}(M^{\mathcal{I}})={\rm rank}(M^{\mathcal{B}})=3$).
		For $n=2$, the third vector is defined by $\mathfrak{r}_{3}=\mathfrak{r}_{1}\times\mathfrak{r}_{2}$
		and $\mathfrak{b}_{3}=\mathfrak{b}_{1}\times\mathfrak{b}_{2}$.
	\end{rem}
	Let $Q_{y}\in\mathbb{S}^{3}$ denote a reconstructed quaternion of
	the true quaternion $Q\in\mathbb{S}^{3}$. $Q_{y}$ can be obtained
	by one of the methods of quaternion determination, for instance QUEST
	\cite{shuster1981three}, optimal QUEST \cite{mortari2000second},
	or for others see \cite{hashim2020AtiitudeSurvey}. The desired (reference)
	trajectory of $Q_{d}\in\mathbb{S}^{3}$ is defined by the desired
	angular velocity $\Omega_{d}\in\mathbb{R}^{3}$ as follows:
	\begin{align}
	\dot{Q}_{d} & =\frac{1}{2}Q_{d}\odot\overline{\Omega_{d}}=\frac{1}{2}\Gamma(\Omega_{d})Q_{d},\hspace{1em}Q_{d}(0)\in\mathbb{S}^{3}\label{eq:Attit_Rdot_desired}
	\end{align}
	
	\begin{assum}\label{Assum:Attit_desired_w}Both $\Omega_{d}$ and
		$\dot{\Omega}_{d}$ are upper bounded by a scalar $C_{d}<\infty$
		with $C_{d}\geq\max\{\sup_{t\geq0}||\Omega_{d}||,\sup_{t\geq0}||\dot{\Omega}_{d}||\}$.\end{assum}
	
	Recall that the aim of this work is to design an observer-based controller
	characterized by guaranteed measures of transient and steady-state
	performance that does not require knowledge of angular velocity. Therefore,
	the first step consists in designing a full-state observer (attitude
	and angular velocity) evolved on $\mathbb{S}^{3}\times\mathbb{R}^{3}$
	ensuring almost global asymptotic stability and following predefined
	measures of transient and steady-state performance of attitude observation
	error. Next, the full-state observer is combined with the attitude
	tracking control on $\mathbb{S}^{3}$ ensuring almost global asymptotic
	stability with guaranteed performance of attitude tracking error.
	The angular velocity is observable if the attitude is known. In view
	of Remark \ref{rem:Rem1_Attit_Rec}, the attitude, in turn, is observable
	if ${\rm rank}(M^{\mathcal{I}})={\rm rank}(M^{\mathcal{B}})=3$. Let
	$\hat{Q}=[\hat{q}_{0},\hat{q}^{\top}]^{\top}\in\mathbb{S}^{3}$, be
	the estimate of $Q=[q_{0},q^{\top}]^{\top}\in\mathbb{S}^{3}$ and
	define the error between $Q$ and $\hat{Q}$ as
	\begin{align}
	\tilde{Q}_{o} & =[\tilde{q}_{o0},\tilde{q}_{o}^{\top}]^{\top}=\hat{Q}^{-1}\odot Q\in\mathbb{S}^{3}\label{eq:Attit_Ro_error}
	\end{align}
	Define $\hat{\Omega}\in\mathbb{R}^{3}$ as the estimate of $\Omega$
	and let the error between $\Omega$ and $\hat{\Omega}$ be
	\begin{equation}
	\overline{\tilde{\Omega}_{o}}=\overline{\Omega}-\tilde{Q}_{o}^{-1}\odot\overline{\hat{\Omega}}\odot\tilde{Q}_{o}\hspace{0.5em}\Leftrightarrow\hspace{0.5em}\tilde{\Omega}_{o}=\Omega-\mathcal{R}_{\tilde{Q}_{o}}^{\top}\hat{\Omega}\in\mathbb{R}^{3}\label{eq:Attit_Omo_error}
	\end{equation}
	where $\mathcal{R}_{\tilde{Q}_{o}}=(\tilde{q}_{o0}^{2}-||\tilde{q}_{o}||^{2})\mathbf{I}_{3}+2\tilde{q}_{o}\tilde{q}_{o}^{\top}+2\tilde{q}_{o0}[\tilde{q}_{o}]_{\times}$,
	see the map in \eqref{eq:Attit_map_Q_SO3}. Assume that $Q_{d}=[q_{d0},q_{d}^{\top}]^{\top}\in\mathbb{S}^{3}$
	is the desired quaternion trajectory, and let the error between $Q$
	and $Q_{d}$ be
	\begin{equation}
	\tilde{Q}_{c}=[\tilde{q}_{c0},\tilde{q}_{c}^{\top}]^{\top}=Q_{d}^{-1}\odot Q\in\mathbb{S}^{3}\label{eq:Attit_Rc_error}
	\end{equation}
	Allow $\Omega_{d}\in\mathbb{R}^{3}$ to be the desired trajectory
	of angular velocity, and let the error between $\Omega$ and $\Omega_{d}$
	be
	\begin{equation}
	\overline{\tilde{\Omega}_{c}}=\overline{\Omega}-\tilde{Q}_{c}^{-1}\odot\overline{\Omega_{d}}\odot\tilde{Q}_{c}\hspace{0.5em}\Leftrightarrow\hspace{0.5em}\tilde{\Omega}_{c}=\Omega-\mathcal{R}_{\tilde{Q}_{c}}^{\top}\Omega_{d}\in\mathbb{R}^{3}\label{eq:Attit_Omc_error}
	\end{equation}
	where $\mathcal{R}_{\tilde{Q}_{c}}=(\tilde{q}_{c0}^{2}-||\tilde{q}_{c}||^{2})\mathbf{I}_{3}+2\tilde{q}_{c}\tilde{q}_{c}^{\top}+2\tilde{q}_{c0}[\tilde{q}_{c}]_{\times}$.
	Recall \eqref{eq:Attit_Ident_Q_SO3}, the objective of attitude observation
	is to drive $\tilde{Q}_{o}\rightarrow\mathbf{Q}_{{\rm I}}$ and $\tilde{\Omega}_{o}\rightarrow0_{3\times1}$.
	Similarly, the objective of attitude control is to drive $\tilde{Q}_{c}\rightarrow\mathbf{Q}_{{\rm I}}$
	and $\tilde{\Omega}_{c}\rightarrow0_{3\times1}$, which, in turn,
	implies $\mathcal{R}_{\tilde{Q}_{o}}\rightarrow\mathbf{I}_{3}$ and
	$\mathcal{R}_{\tilde{Q}_{c}}\rightarrow\mathbf{I}_{3}$, visit \eqref{eq:Attit_Ident_Q_SO3}.
	By the definition of unit-quaternion and the identity property, $\tilde{q}_{o0}\rightarrow\pm1$
	implies that $\tilde{q}_{o}\rightarrow0_{3\times1}$ and vice versa.
	Likewise, $\tilde{q}_{c0}\rightarrow\pm1$ indicates that $\tilde{q}_{c}\rightarrow0_{3\times1}$
	and vice versa.
	
	\section{Guaranteed Performance\label{sec:Systematic-Performance}}
	
	This section aims to guarantee that the tracking performance of
	\begin{equation}
	e_{\star}=1-\tilde{q}_{\star0}\label{eq:Attit_Pres_error}
	\end{equation}
	is initiated within a known large set and decreased smoothly to stay
	within a known small set where the subscript $\star$ is to be replaced
	by $o$ and $c$. Note that unit-quaternion is subject to non-uniqueness
	such that for $\tilde{Q}_{\star}=-\tilde{Q}_{\star}\in\mathbb{S}^{3}$
	one has $\mathcal{R}_{\tilde{Q}_{\star}}\in\mathbb{SO}\left(3\right)$.
	As such, in the algorithm setup it is not hard to obtain $\tilde{q}_{\star0}\in\mathbb{R}_{+}$
	for all $t\geq0$. Define the following positive time-decreasing prescribed
	performance function (PPF) with the map $\xi_{\star}:\mathbb{R}_{+}\to\mathbb{R}_{+}$
	\cite{bechlioulis2008robust}
	\begin{equation}
	\xi_{\star}(t)=(\xi_{\star}^{0}-\xi_{\star}^{\infty})\exp(-\ell_{\star}t)+\xi_{\star}^{\infty}\label{eq:Attit_Presc}
	\end{equation}
	with $\xi_{\star}(0)=\xi_{\star}^{0}>0$ and $\xi_{\star}^{\infty}>0$
	being the upper bounds of a known large set and small set, respectively,
	and $\ell_{\star}>0$ being the convergence rate of $\xi_{\star}=\xi_{\star}(t)$
	from $\xi_{\star}^{0}$ to $\xi_{\star}^{\infty}$. It can be deduced
	that $\lim\limits _{t\to\infty}\xi_{\star}=\xi_{\star}^{\infty}$.
	$e_{\star}=e_{\star}(t)$ can be controlled by the predefined transient
	and steady-state boundaries provided that
	\begin{align}
	-\underline{\delta}_{\star}\xi_{\star}<e_{\star}<\xi_{\star}, & \text{ if }e_{\star}(0)\geq0\label{eq:Attit_ePos}
	\end{align}
	where $\underline{\delta}_{\star}\in[0,1]$. Due to the fact that
	$e_{\star}\in[0,1]\forall t\geq0$, $e_{\star}$ is controlled by
	the PPF if the condition in \eqref{eq:Attit_ePos} is met. Fig. \ref{fig:NAVPPF_2}
	illustrates the concept of PPF in action allowing for the desired
	convergence of the constrained error $e_{\star}$ in \eqref{eq:Attit_ePos}.
	
	\begin{figure}[H]
		\centering{}\includegraphics[scale=0.32]{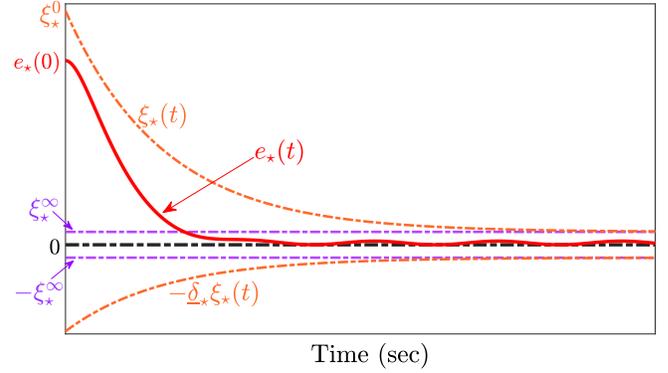} \caption{Convergence of $e_{\star}(t)$ controlled by the PPF according to
			Eq. \eqref{eq:Attit_ePos}.}
		\label{fig:NAVPPF_2}
	\end{figure}
	
	From \eqref{eq:Attit_ePos} and Fig. \ref{fig:NAVPPF_2}, $e_{\star}$
	is constrained. Let us redefine the constrained error $e_{\star}$
	as
	\begin{equation}
	e_{\star}=\xi_{\star}\mathcal{N}(E_{\star})\label{eq:Attit_e_Constrained}
	\end{equation}
	where $\xi_{\star}$ is given in \eqref{eq:Attit_Presc}, $E_{\star}\in\mathbb{R}$
	is transformed (unconstrained) error, and $\mathcal{N}(E_{\star})$
	follows Assumption \ref{Assum:Attit_2PPF}:
	
	\begin{assum}\label{Assum:Attit_2PPF}$\mathcal{N}(E_{\star})$ is
		characterized by the following properties \cite{bechlioulis2008robust}: 
		\begin{enumerate}
			\item[\textbf{1.}] $\mathcal{N}(E_{\star})$ is smooth and strictly increasing. 
			\item[\textbf{2.}] $\mathcal{N}(E_{\star})$ is constrained by $-\underline{\delta}_{\star}<\mathcal{N}(E_{\star})<\overline{\delta}_{\star}$
			with $\overline{\delta}_{\star}$ and $\underline{\delta}_{\star}$
			being positive constants and $\underline{\delta}_{\star}\leq\overline{\delta}_{\star}$. 
			\item[\textbf{3.}] $\underset{E_{i}\rightarrow-\infty}{\lim}\mathcal{N}(E_{\star})=-\underline{\delta}_{i}$
			and $\underset{E_{i}\rightarrow+\infty}{\lim}\mathcal{N}(E_{\star})=\bar{\delta}_{i}$
			where
		\end{enumerate}
		\begin{equation}
		\mathcal{N}(E_{\star})=\frac{\overline{\delta}_{\star}\exp(E_{\star})-\underline{\delta}_{\star}\exp(-E_{\star})}{\exp(E_{\star})+\exp(-E_{\star})}\label{eq:Attit_Smooth}
		\end{equation}
	\end{assum}
	
	Based on \eqref{eq:Attit_e_Constrained} one finds
	\begin{equation}
	\begin{aligned}E_{\star}= & \mathcal{N}^{-1}(e_{\star}/\xi_{\star})=\frac{1}{2}\text{ln}\frac{\underline{\delta}_{\star}+e_{\star}/\xi_{\star}}{\overline{\delta}_{\star}-e_{\star}/\xi_{\star}},\hspace{0.2em}\overline{\delta}_{\star}\geq\underline{\delta}_{\star}\end{aligned}
	\label{eq:Attit_trans}
	\end{equation}
	
	\begin{rem}
		\label{rem:Attit_PFF_e_E_0}\cite{hashim2020TITS_SLAM,hashim2020SO3Wiley}
		It becomes apparent that selecting $\underline{\delta}_{\star}=\overline{\delta}_{\star}$
		implies that $E_{\star}>0$ for all $e_{\star}>0$, and $E_{\star}=0$
		only at $e_{\star}=0$. Thus, the critical point of $e_{\star}$ coincides
		with the critical point of $E_{\star}$.
	\end{rem}
	\begin{rem}
		\label{rem:Attit_PPF1}\cite{hashim2020TITS_SLAM,hashim2020SO3Wiley}
		From \eqref{eq:Attit_Presc}, \eqref{eq:Attit_e_Constrained}, and
		\eqref{eq:Attit_trans}, $e_{\star}$ is constrained by $\xi_{\star}$
		if and only if $E_{\star}\in\mathcal{L}_{\infty}$.
	\end{rem}
	Define
	\begin{equation}
	\begin{split}\Delta_{\star} & =\frac{1}{2\xi_{\star}}\frac{\partial\mathcal{N}^{-1}(e_{\star}/\xi_{\star})}{\partial(e_{\star}/\xi_{\star})}\\
	& =\frac{1/(2\xi_{\star})}{\underline{\delta}_{\star}+e_{\star}/\xi_{\star}}+\frac{1/(2\xi_{\star})}{\overline{\delta}_{\star}-e_{\star}/\xi_{\star}}
	\end{split}
	\label{eq:Attit_mu}
	\end{equation}
	where $\Delta_{\star}$ is a positive function. Hence, one obtains
	\begin{align}
	\dot{E}_{\star} & =\Delta_{\star}(\dot{e}_{\star}-\frac{\dot{\xi}_{\star}}{\xi_{\star}}e_{\star})\label{eq:Attit_Trans_dot}
	\end{align}

	\section{Observer-based Controller with Guaranteed Convergence\label{sec:Attit_Observer_Controller}}
	
	\subsection{Full State Observer with Guaranteed Performance\label{subsec:Observer}}
	
	Let $\hat{Q}\in\mathbb{S}^{3}$ and $\hat{\Omega}\in\mathbb{R}^{3}$
	denote the estimates of $Q$ and $\Omega$, respectively. Consider
	the following attitude and angular velocity observer design:
	\begin{equation}
	\begin{cases}
	\dot{\hat{Q}} & =\frac{1}{2}\hat{Q}\odot\overline{\hat{\Omega}+W_{\Omega}}=\frac{1}{2}\Gamma(\hat{\Omega}+W_{\Omega})\hat{Q}\\
	\hat{J}\dot{\hat{\Omega}} & =[\hat{J}\hat{\Omega}]_{\times}\hat{\Omega}+\hat{\tau}+\hat{J}[\hat{\Omega}]_{\times}W_{\Omega}+W_{\tau}\\
	W_{\Omega} & =-k_{o}(E_{o}\Delta_{o}+1)\mathcal{R}_{\tilde{Q}_{o}}\tilde{q}_{o}\\
	W_{\tau} & =-\gamma_{o}(E_{o}\Delta_{o}+1)\mathcal{R}_{\tilde{Q}_{o}}\tilde{q}_{o}
	\end{cases}\label{eq:Attit_Obsv_Design}
	\end{equation}
	where $\tilde{Q}_{o}=[\tilde{q}_{o0},\tilde{q}_{o}^{\top}]^{\top}=\hat{Q}^{-1}\odot Q_{y}$
	denotes the unit-quaternion error in observation, $Q_{y}$ stands
	for a reconstructed unit-quaternion obtained, for instance, by QUEST
	algorithm \cite{shuster1981three,hashim2020AtiitudeSurvey}, $\mathcal{R}_{\tilde{Q}_{o}}$
	is the attitude observation error, $[0,\hat{\tau}^{\top}]^{\top}=\tilde{Q}_{o}\odot\overline{\tau}\odot\tilde{Q}_{o}^{-1}$,
	or more simply, $\hat{\tau}=\mathcal{R}_{\tilde{Q}_{o}}\tau$ denotes
	the torque input described in the observer frame, $\hat{J}=\mathcal{R}_{\tilde{Q}_{o}}J\mathcal{R}_{\tilde{Q}_{o}}^{\top}$
	represents the inertia matrix described in the observer-frame, $e_{o}=1-\tilde{q}_{o0}$,
	and $E_{o}=\frac{1}{2}\text{ln}\frac{\underline{\delta}_{o}+e_{o}/\xi_{o}}{\overline{\delta}_{o}-e_{o}/\xi_{o}}$
	denotes the transformed error. Additionally, $\xi_{o}$ is the PPF
	defined in \eqref{eq:Attit_Presc} where $\xi_{o}^{0}>e_{o}(0)$,
	$W_{\Omega}$ and $W_{\tau}$ are correction factors, and $k_{o}$,
	$\gamma_{o}$, and $\underline{\delta}_{o}=\overline{\delta}_{o}>e_{o}(0)$
	are positive constants.
	\begin{thm}
		\label{thm:Theorem1_OBSV}Consider the dynamics in \eqref{eq:Attit_True_Dyn_Q}
		and the observer in \eqref{eq:Attit_Obsv_Design}. Let Assumption
		\ref{Assum:Attit_desired_w} hold true given that the condition in
		Remark \ref{rem:Rem1_Attit_Rec} is met. Let $k_{o}$, $\gamma_{o}$,
		$\underline{\delta}_{o}=\overline{\delta}_{o}>e_{o}(0)$, $\xi_{o}^{0}>e_{o}(0)$,
		and $\xi_{o}^{\infty}$ be positive constants. Then for $E_{o}(0)\in\mathcal{L}_{\infty}$,
		1) $E_{o}$, $e_{o}$, and $\hat{\Omega}$ are globally bounded, and
		2) starting from any initial conditions, all $E_{o}$, $e_{o}$, and
		$\tilde{\Omega}_{o}$ converge asymptotically to the origin with $\lim_{t\rightarrow\infty}\tilde{q}_{o}\rightarrow0_{3\times1}$
		and $\lim_{t\rightarrow\infty}\tilde{q}_{o0}\rightarrow\pm1$.
	\end{thm}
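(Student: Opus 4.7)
The plan is to build a Lyapunov function that couples the transformed attitude-observation error $E_{o}$ with the angular-velocity observation error $\tilde{\Omega}_{o}$, and then exploit the algebraic structure of the correction terms $W_{\Omega}$ and $W_{\tau}$ in \eqref{eq:Attit_Obsv_Design}, specifically the shared factor $(E_{o}\Delta_{o}+1)$, to force its time derivative to be negative semi-definite.

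First I would derive the error kinematics. Starting from $\tilde{Q}_{o}=\hat{Q}^{-1}\odot Q_{y}$, differentiating via the quaternion product rule and substituting \eqref{eq:Attit_True_Dyn_Q} together with the first line of \eqref{eq:Attit_Obsv_Design}, one obtains $\dot{\tilde{Q}}_{o}=\tfrac{1}{2}\tilde{Q}_{o}\odot\overline{\tilde{\Omega}_{o}-\mathcal{R}_{\tilde{Q}_{o}}^{\top}W_{\Omega}}$. Isolating the scalar part gives $\dot{\tilde{q}}_{o0}=-\tfrac{1}{2}\tilde{q}_{o}^{\top}(\tilde{\Omega}_{o}-\mathcal{R}_{\tilde{Q}_{o}}^{\top}W_{\Omega})$, and combining this with $e_{o}=1-\tilde{q}_{o0}$ and \eqref{eq:Attit_Trans_dot} produces the dynamics of $E_{o}$. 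For the velocity error, differentiating \eqref{eq:Attit_Omo_error} and using the true dynamics, the observer, and the observer-frame identities $\hat{J}=\mathcal{R}_{\tilde{Q}_{o}}J\mathcal{R}_{\tilde{Q}_{o}}^{\top}$ and $\hat{\tau}=\mathcal{R}_{\tilde{Q}_{o}}\tau$ should yield a clean expression for $J\dot{\tilde{\Omega}}_{o}$ in which the gyroscopic terms $[J\Omega]_{\times}\Omega$ fall out of the cross-frame transport, leaving only the forcing from $W_{\tau}$.

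I would then propose the Lyapunov candidate
\[
V=\tfrac{1}{2}E_{o}^{2}+\tfrac{1}{2\gamma_{o}}\tilde{\Omega}_{o}^{\top}J\tilde{\Omega}_{o},
\]
which is positive definite in $(E_{o},\tilde{\Omega}_{o})$. Upon differentiation, the bilinear term $\tfrac{1}{2}E_{o}\Delta_{o}\,\tilde{q}_{o}^{\top}\tilde{\Omega}_{o}$ inherited from $\dot{E}_{o}$ is exactly cancelled by the $\tilde{\Omega}_{o}^{\top}W_{\tau}$ contribution due to the shared factor $(E_{o}\Delta_{o}+1)$; what remains should reduce, after using $\mathcal{R}_{\tilde{Q}_{o}}^{\top}\mathcal{R}_{\tilde{Q}_{o}}=\mathbf{I}_{3}$, to an expression of the form $\dot{V}\leq -\kappa_{1}(E_{o}\Delta_{o}+1)\|\tilde{q}_{o}\|^{2}-\Delta_{o}\tfrac{\dot{\xi}_{o}}{\xi_{o}}e_{o}E_{o}$, which is non-positive because $\dot{\xi}_{o}\le 0$ and because Remark \ref{rem:Attit_PFF_e_E_0} guarantees $e_{o}E_{o}\ge 0$ when $\underline{\delta}_{o}=\overline{\delta}_{o}$.

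From $\dot{V}\le 0$ I obtain $V\in\mathcal{L}_{\infty}$, hence $E_{o}\in\mathcal{L}_{\infty}$ and $\tilde{\Omega}_{o}\in\mathcal{L}_{\infty}$; by Remark \ref{rem:Attit_PPF1} this confines $e_{o}$ strictly within $(-\underline{\delta}_{o}\xi_{o},\xi_{o})$, so $\Delta_{o}$, $W_{\Omega}$, and $W_{\tau}$ stay bounded for all $t\ge 0$, and Assumption \ref{Assum:Attit_desired_w} then yields boundedness of $\hat{\Omega}$, proving claim 1. For claim 2, integrating $\dot{V}$ gives $\tilde{q}_{o}\in\mathcal{L}_{2}$; verifying that $\dot{\tilde{q}}_{o}$ is bounded (which follows from the already-established boundedness of all right-hand sides) lets Barbalat's lemma conclude $\tilde{q}_{o}\to 0_{3\times 1}$. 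The unit-quaternion identity then forces $\tilde{q}_{o0}\to\pm 1$, hence $e_{o}\to 0$ and $E_{o}\to 0$; plugging $\tilde{q}_{o}\to 0$ back into the velocity-error equation and invoking a second Barbalat argument on $\dot{\tilde{\Omega}}_{o}$ yields $\tilde{\Omega}_{o}\to 0_{3\times 1}$. The main obstacle will be the bookkeeping in $\dot{V}$: the non-commutativity of $\odot$, the rotation of $J$ and $\tau$ into the observer frame, and the gyroscopic cross-terms must conspire so that only dissipative terms survive; a secondary subtlety is ruling out the antipodal equilibrium $\tilde{q}_{o0}=-1$, which is handled by the PPF initialization $\xi_{o}^{0}>e_{o}(0)$ keeping $\tilde{q}_{o0}$ away from zero along the entire trajectory.
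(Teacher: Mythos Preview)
Your proposed Lyapunov function $V=\tfrac{1}{2}E_{o}^{2}+\tfrac{1}{2\gamma_{o}}\tilde{\Omega}_{o}^{\top}J\tilde{\Omega}_{o}$ does \emph{not} make the indefinite cross term vanish. Track the two sources of $\tilde{q}_{o}^{\top}\tilde{\Omega}_{o}$ in $\dot V$: from $E_{o}\dot E_{o}$ you get a contribution with coefficient $\tfrac{1}{2}E_{o}\Delta_{o}$ (since $\dot e_{o}$ carries the factor $\tfrac{1}{2}\tilde{q}_{o}^{\top}\tilde{\Omega}_{o}$), whereas the term $\tfrac{1}{\gamma_{o}}\tilde{\Omega}_{o}^{\top}(-\mathcal{R}_{\tilde{Q}_{o}}^{\top}W_{\tau})$ produces $(E_{o}\Delta_{o}+1)\tilde{q}_{o}^{\top}\tilde{\Omega}_{o}$. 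No scalar rescaling of $E_{o}^{2}$ can match $E_{o}\Delta_{o}+1$, so a sign-indefinite residual $\tilde{q}_{o}^{\top}\tilde{\Omega}_{o}$ survives and your claimed bound $\dot V\le -\kappa_{1}(E_{o}\Delta_{o}+1)\|\tilde{q}_{o}\|^{2}-\Delta_{o}\tfrac{\dot\xi_{o}}{\xi_{o}}e_{o}E_{o}$ is false. The ``$+1$'' in $W_{\Omega}$ and $W_{\tau}$ is precisely what demands an extra storage term: the paper adds $(1-\tilde q_{o0})$ to the Lyapunov function, so that $-\dot{\tilde q}_{o0}$ supplies the missing ``$+1$'' portion and the combined coefficient in front of $\tilde{q}_{o}^{\top}(\tilde{\Omega}_{o}-\mathcal{R}_{\tilde{Q}_{o}}^{\top}W_{\Omega})$ becomes exactly $(E_{o}\Delta_{o}+1)$, which then cancels against the $W_{\tau}$ contribution. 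With $V_{o}=E_{o}^{2}+(1-\tilde q_{o0})+\tfrac{1}{2\gamma_{o}}\tilde{\Omega}_{o}^{\top}J\tilde{\Omega}_{o}$ one obtains $\dot V_{o}\le -k_{o}(E_{o}^{2}\Delta_{o}^{2}+1)\|\tilde q_{o}\|^{2}$, after using that $S(\Omega)=[J\Omega]_{\times}-J[\Omega]_{\times}-[\Omega]_{\times}J$ is skew (the gyroscopic terms do \emph{not} ``fall out'' of $J\dot{\tilde\Omega}_{o}$; they survive as $S(\Omega)\tilde\Omega_{o}-[J\tilde\Omega_{o}]_{\times}\tilde\Omega_{o}$ and only vanish once dotted with $\tilde\Omega_{o}$).

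A second, smaller gap is your route to $\tilde{\Omega}_{o}\to 0$. Applying Barbalat to $\dot{\tilde{\Omega}}_{o}$ would at best yield $\dot{\tilde{\Omega}}_{o}\to 0$, not $\tilde{\Omega}_{o}\to 0$; and with $\tilde q_{o}\to 0$ the velocity-error equation reduces to $J\dot{\tilde\Omega}_{o}=S(\Omega)\tilde\Omega_{o}-[J\tilde\Omega_{o}]_{\times}\tilde\Omega_{o}$, which conserves $\tilde\Omega_{o}^{\top}J\tilde\Omega_{o}$ and therefore cannot drive $\tilde\Omega_{o}$ to zero by itself. The correct argument (used in the paper) is to apply Barbalat to $\tilde{Q}_{o}$: once $\tilde q_{o}\to 0$ and $\ddot{\tilde Q}_{o}$ is bounded, one gets $\dot{\tilde Q}_{o}\to 0$; the kinematic relation $\dot{\tilde Q}_{o}=\tfrac{1}{2}\tilde Q_{o}\odot[0,(\tilde\Omega_{o}-\mathcal{R}_{\tilde Q_{o}}^{\top}W_{\Omega})^{\top}]^{\top}$ together with $W_{\Omega}\to 0$ then forces $\tilde\Omega_{o}\to 0$.
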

	\begin{proof}Recall the error in \eqref{eq:Attit_Ro_error}, $\tilde{Q}_{o}=\hat{Q}^{-1}\odot Q$.
		From \eqref{eq:Attit_True_Dyn_Q} and \eqref{eq:Attit_Obsv_Design},
		one obtains
		\begin{align}
		\dot{\tilde{Q}}_{o} & =\dot{\hat{Q}}^{-1}\odot Q+\hat{Q}^{-1}\odot\dot{Q}\nonumber \\
		& =-\frac{1}{2}\overline{\hat{\Omega}+W_{\Omega}}\odot\tilde{Q}_{o}+\frac{1}{2}\tilde{Q}_{o}\odot\overline{\Omega}\nonumber \\
		& =\frac{1}{2}\tilde{Q}_{o}\odot(\overline{\Omega}-\tilde{Q}_{o}^{-1}\odot\overline{\hat{\Omega}+W_{\Omega}}\odot\tilde{Q}_{o})\nonumber \\
		& =\frac{1}{2}\tilde{Q}_{o}\odot\left[\begin{array}{c}
		0\\
		\tilde{\Omega}_{o}-\mathcal{R}_{\tilde{Q}_{o}}^{\top}W_{\Omega}
		\end{array}\right]\label{eq:Attit_Qo_dot}
		\end{align}
		In view of \eqref{eq:Attit_True_Dyn_SO3} and \eqref{eq:Attit_True_Dyn_Q},
		the mapping of \eqref{eq:Attit_Qo_dot} to $\mathbb{SO}(3)$ is
		\begin{equation}
		\dot{\mathcal{R}}_{\tilde{Q}_{o}}=\mathcal{R}_{\tilde{Q}_{o}}[\tilde{\Omega}_{o}-\mathcal{R}_{\tilde{Q}_{o}}^{\top}W_{\Omega}]_{\times}\label{eq:Attit_Ro_dot}
		\end{equation}
		Hence, the dynamics in \eqref{eq:Attit_Qo_dot} become
		\begin{align}
		\left[\begin{array}{c}
		\dot{\tilde{q}}_{o0}\\
		\dot{\tilde{q}}_{o}
		\end{array}\right]= & \frac{1}{2}\left[\begin{array}{c}
		-\tilde{q}_{o}^{\top}\\
		\tilde{q}_{o0}\mathbf{I}_{3}+\left[\tilde{q}_{o}\right]_{\times}
		\end{array}\right](\tilde{\Omega}_{o}-\mathcal{R}_{\tilde{Q}_{o}}^{\top}W_{\Omega})\label{eq:Attit_RI2_dot}
		\end{align}
		As such, $\dot{E}_{o}$ in \eqref{eq:Attit_Trans_dot} is as follows:
		\begin{align}
		\dot{E}_{o} & =\Delta_{o}(-\frac{1}{2}\tilde{q}_{o}^{\top}(\tilde{\Omega}_{o}-\mathcal{R}_{\tilde{Q}_{o}}^{\top}W_{\Omega})-\frac{\dot{\xi}_{o}}{\xi_{o}}(1-\tilde{q}_{o0}))\label{eq:Attit_Eo_dot}
		\end{align}
		where $\Delta_{o}=\frac{1/(2\xi_{o})}{\underline{\delta}_{o}+e_{o}/\xi_{o}}+\frac{1/(2\xi_{o})}{\overline{\delta}_{o}-e_{o}/\xi_{o}}$
		as expressed in \eqref{eq:Attit_mu}. Recall $\tilde{\Omega}_{o}=\Omega-\mathcal{R}_{\tilde{Q}_{o}}^{\top}\hat{\Omega}$
		as in \eqref{eq:Attit_Omo_error}. From \eqref{eq:Attit_True_Dyn_Q},
		\eqref{eq:Attit_Obsv_Design}, and \eqref{eq:Attit_Ro_dot}, one finds
		\begin{align}
		& J\dot{\tilde{\Omega}}_{o}=J\dot{\Omega}-J\dot{\mathcal{R}}_{\tilde{Q}_{o}}^{\top}\hat{\Omega}-J\mathcal{R}_{\tilde{Q}_{o}}^{\top}\dot{\hat{\Omega}}\nonumber \\
		& =\left[J\Omega\right]_{\times}\Omega+(J[\tilde{\Omega}_{o}]_{\times}-[J\mathcal{R}_{\tilde{Q}_{o}}^{\top}\hat{\Omega}]_{\times})\mathcal{R}_{\tilde{Q}_{o}}^{\top}\hat{\Omega}-\tilde{\mathcal{R}}_{\tilde{Q}_{o}}^{\top}W_{\tau}\nonumber \\
		& =S(\Omega)\tilde{\Omega}_{o}-[J\tilde{\Omega}_{o}]_{\times}\tilde{\Omega}_{o}-\mathcal{R}_{\tilde{Q}_{o}}^{\top}W_{\tau}\label{eq:Attit_Omo_dot}
		\end{align}
		such that
		\begin{align}
		& \left[J\Omega\right]_{\times}\Omega+(J[\tilde{\Omega}_{o}]_{\times}-[J\mathcal{R}_{\tilde{Q}_{o}}^{\top}\hat{\Omega}]_{\times})\mathcal{R}_{\tilde{Q}_{o}}^{\top}\hat{\Omega}\nonumber \\
		& =([J\Omega]_{\times}-J[\Omega]_{\times}-[\Omega]_{\times}J)\tilde{\Omega}_{o}-[J\tilde{\Omega}_{o}]_{\times}\tilde{\Omega}_{o}\nonumber \\
		& =S(\Omega)\tilde{\Omega}_{o}-[J\tilde{\Omega}_{o}]_{\times}\tilde{\Omega}_{o}\label{eq:Attit_S_Omo}
		\end{align}
		where $S(\Omega)=[J\Omega]_{\times}-J[\Omega]_{\times}-[\Omega]_{\times}J$
		is a skew symmetric matrix. Consider the following Lyapunov function
		candidate
		\begin{equation}
		V_{o}=E_{o}^{2}+(1-\tilde{q}_{o0})+\frac{1}{2\gamma_{\Omega}}\tilde{\Omega}_{o}^{\top}J\tilde{\Omega}_{o}\label{eq:Attit_Vo}
		\end{equation}
		In view of \eqref{eq:Attit_RI2_dot}, \eqref{eq:Attit_Eo_dot}, and
		\eqref{eq:Attit_Omo_dot}, and with direct substitution of $W_{\Omega}$
		and $W_{\tau}$ by their definitions in \eqref{eq:Attit_Obsv_Design},
		one finds the time derivative of $V_{o}$ in \eqref{eq:Attit_Vo}
		as follows:
		\begin{align}
		\dot{V}_{o}= & -(E_{o}\Delta_{o}+1)\tilde{q}_{o}^{\top}(\tilde{\Omega}_{o}-\mathcal{R}_{\tilde{Q}_{o}}^{\top}W_{\Omega})-\frac{2E_{o}\Delta_{o}\dot{\xi}_{o}}{\xi_{o}}(1-\tilde{q}_{o0})\nonumber \\
		& +\frac{1}{\gamma_{o}}\tilde{\Omega}_{o}^{\top}(S(\Omega)\tilde{\Omega}_{o}-[J\tilde{\Omega}_{o}]_{\times}\tilde{\Omega}_{o}-\mathcal{R}_{\tilde{Q}_{o}}^{\top}W_{\tau})\nonumber \\
		\leq & -k_{o}(E_{o}^{2}\Delta_{o}^{2}+1)||\tilde{q}_{o}||^{2}\label{eq:Attit_Vo_dot}
		\end{align}
		where $1-\tilde{q}_{o0}\leq1-\tilde{q}_{o0}^{2}=||\tilde{q}_{o}||^{2}$,
		and since $\ell_{o}>\dot{\xi}_{o}/\xi_{o}$, $k_{o}$ is selected
		such that $k_{o}\geq\ell_{o}$. Note that $\tilde{\Omega}_{o}^{\top}S(\Omega)\tilde{\Omega}_{o}=0$
		and $\tilde{\Omega}_{o}^{\top}[J\tilde{\Omega}_{o}]_{\times}\tilde{\Omega}_{o}=0$.
		By the definition of $E_{o}$ in \eqref{eq:Attit_trans}, $\underline{\delta}_{o}=\overline{\delta}_{o}$
		implies that $E_{o}>0$ for all $1>|\tilde{q}_{o0}|$ and $E_{o}=0$
		only at $\tilde{q}_{o0}=\pm1$, see Remark \eqref{rem:Attit_PFF_e_E_0}.
		Hence, $V_{o}$ is a non-increasing function indicating that $E_{o}$,
		$e_{o}$, and $\tilde{\Omega}_{o}$ are bounded, $\lim_{t\rightarrow\infty}\tilde{q}_{o}=0_{3\times1}$,
		and $\lim_{t\rightarrow\infty}\mathcal{R}_{\tilde{Q}_{o}}=\mathbf{I}_{3}$.
		Thus, $W_{\Omega},W_{\tau}\rightarrow0_{3\times1}$, and, based on
		Barbalat Lemma, $\ddot{\tilde{Q}}_{o}$ and $\ddot{E}_{o}$ are bounded
		by $\dot{\tilde{Q}}_{o}\rightarrow0_{4\times1}$. From \eqref{eq:Attit_Qo_dot},
		one has $\tilde{\Omega}_{o}\rightarrow0_{3\times1}$, and, based on
		Barbalat Lemma, $\ddot{\tilde{\Omega}}_{o}$ is bounded and $\dot{\tilde{\Omega}}_{o}\rightarrow0_{3\times1}$
		completing the proof. Consider the cross term $-\frac{d}{dt}\frac{\tilde{q}_{o0}}{2\delta_{o}}\tilde{\Omega}_{o}^{\top}\tilde{q}_{o}$
		with the following derivative:
		\begin{align}
		& -\frac{d}{dt}\frac{\tilde{q}_{o0}}{2\delta_{o}}\tilde{\Omega}_{o}^{\top}\tilde{q}_{o}=-\frac{1}{2\delta_{o}}\dot{\tilde{\Omega}}_{o}^{\top}\tilde{q}_{o}-\frac{1}{2\delta_{o}}\tilde{\Omega}_{o}^{\top}\dot{\tilde{q}}_{o}-\frac{\dot{\tilde{q}}_{o0}}{2\delta_{o}}\tilde{\Omega}_{o}^{\top}\tilde{q}_{o}\nonumber \\
		& \hspace{1em}\leq-\frac{1}{4\delta_{o}}||\tilde{\Omega}_{o}||^{2}+\frac{c_{o1}+E_{o}\Delta_{o}k_{o}}{\delta_{o}}||\tilde{\Omega}_{o}||\,||\tilde{q}_{o}|||\nonumber \\
		& \hspace{2em}+\frac{c_{o2}}{\delta_{o}}||\tilde{q}_{o}||^{2}\label{eq:Attit_Cross_Term}
		\end{align}
		where $\bar{c}_{J}=\overline{\lambda}(J^{-1})$, $\eta_{\Omega}=\sup_{t\geq0}S(\Omega)$,
		$c_{o1}=\frac{\eta_{\Omega}\bar{c}_{J}}{2}+\frac{\eta_{\Omega_{o}}\bar{c}_{J}}{2}+\frac{\eta_{\Omega_{o}}}{2}$,
		and $c_{o2}=\frac{\gamma_{o}\bar{c}_{J}}{2}$. Let $c_{m}=\max\{c_{o1},k_{o}\}$,
		and consider the following Lyapunov function candidate:
		\begin{equation}
		\mathcal{L}_{o}=E_{o}^{2}+\underbrace{(1-\tilde{q}_{o0})+\frac{1}{2\gamma_{\Omega}}\tilde{\Omega}_{o}^{\top}J\tilde{\Omega}_{o}-\frac{\tilde{q}_{o0}}{2\delta_{o}}\tilde{\Omega}_{o}^{\top}\tilde{q}_{o}}_{\mathcal{L}_{qo}}\label{eq:Attit_Lo}
		\end{equation}
		with $e_{o}=[||\tilde{q}_{o0}||,||\tilde{q}_{o}||,||\tilde{\Omega}_{o}||]^{\top}$
		\begin{align*}
		& e_{o}^{\top}\underbrace{\left[\begin{array}{ccc}
			2 & 0 & -1\\
			0 & 2 & -\frac{1}{4\delta_{o}}\\
			-1- & \frac{1}{4\delta_{o}} & \frac{\underline{\lambda}_{J}}{2\gamma_{\Omega}}
			\end{array}\right]}_{P_{1}}e_{o}\leq\mathcal{L}_{qo}\\
		& \hspace{8em}\leq e_{o}^{\top}\underbrace{\left[\begin{array}{ccc}
			2 & 0 & 1\\
			0 & 2 & \frac{1}{4\delta_{o}}\\
			1 & \frac{1}{4\delta_{o}} & \frac{\underline{\lambda}_{J}}{2\gamma_{\Omega}}
			\end{array}\right]}_{P_{2}}e_{o}
		\end{align*}
		$P_{1}$ and $P_{2}$ can be made positive by selecting $\underline{\lambda}_{J}\frac{16\delta_{o}^{2}}{1+32\delta_{o}^{2}}>\gamma_{\Omega}$.
		From \eqref{eq:Attit_Vo_dot} and \eqref{eq:Attit_Cross_Term}, and
		selecting $\delta_{o}>\frac{2c_{o2}}{c_{\ell}}$, the time derivative
		of $\mathcal{L}_{o}$ in \eqref{eq:Attit_Lo} is
		\begin{align}
		\dot{\mathcal{L}}_{o}\leq & -\varepsilon_{o}^{\top}\underbrace{\left[\begin{array}{ccc}
			k_{o} & 0 & \frac{c_{m}}{2\delta_{o}}\\
			0 & k_{o} & \frac{c_{m}}{2\delta_{o}}\\
			\frac{c_{m}}{2\delta_{o}} & \frac{c_{m}}{2\delta_{o}} & \frac{1}{4\delta_{o}}
			\end{array}\right]}_{A_{o}}\varepsilon_{o}\label{eq:Attit_Lo_dot}
		\end{align}
		where $\varepsilon_{o}=\left[||\tilde{q}_{o}||,E_{o}\Delta_{o}||\tilde{q}_{o}||,||\tilde{\Omega}_{o}||\right]^{\top}$.
		$A_{o}$ can be made positive if $\delta_{o}>\frac{2c_{m}^{2}}{k_{o}}$
		such that
		\begin{align}
		\dot{\mathcal{L}}_{o}\leq & -\underline{\lambda}_{o}(E_{o}^{2}\Delta_{o}^{2}+1)||\tilde{q}_{o}||^{2}-\underline{\lambda}_{o}||\tilde{\Omega}_{o}||^{2}\label{eq:Attit_Lo_dot_Final}
		\end{align}
		with $\underline{\lambda}_{o}=\underline{\lambda}(A_{o})$ being the
		minimum singular value of $A_{o}$ proving Theorem \ref{thm:Theorem1_OBSV}.\end{proof}
	
	\subsection{Observer-based Controller with Guaranteed Convergence\label{subsec:Observer_Controller}}
	
	Consider the following controller design with control input $\tau$
	for \eqref{eq:Attit_True_Dyn_Q}: 
	\begin{equation}
	\begin{cases}
	\dot{Q}_{a} & =\frac{1}{2}Q_{a}\odot\overline{\beta_{a}}=\frac{1}{2}\Gamma(\beta_{a})Q_{a},\hspace{1em}Q_{a}(0)\in\mathbb{S}^{3}\\
	\beta_{a} & =-k_{\beta}(E_{a}\Delta_{a}+1)\mathcal{R}_{\tilde{Q}_{a}}^{\top}\tilde{q}_{a}\\
	W_{c} & =-k_{w}(E_{a}\Delta_{a}\tilde{q}_{a}+\tilde{q}_{c})\\
	\tau & =-W_{c}-k_{c}(\mathcal{R}_{\tilde{Q}_{o}}^{\top}\hat{\Omega}-\mathcal{R}_{\tilde{Q}_{c}}^{\top}\Omega_{d})\\
	& \hspace{1em}+[\mathcal{R}_{\tilde{Q}_{c}}^{\top}\Omega_{d}]_{\times}J\mathcal{R}_{\tilde{Q}_{c}}^{\top}\Omega_{d}+J\mathcal{R}_{\tilde{Q}_{c}}^{\top}\dot{\Omega}_{d}
	\end{cases}\label{eq:Attit_Cont_Design}
	\end{equation}
	where $\tilde{Q}_{c}=[\tilde{q}_{c0},\tilde{q}_{c}^{\top}]^{\top}=Q_{d}^{-1}\odot Q$
	is the unit-quaternion error, $Q_{d}$ is the desired unit-quaternion,
	$Q_{a}=[q_{a0},q_{a}^{\top}]^{\top}$ is the auxiliary unit-quaternion,
	$\tilde{Q}_{a}=[\tilde{q}_{a0},\tilde{q}_{a}^{\top}]^{\top}=Q_{a}^{-1}\odot\tilde{Q}_{c}$,
	$\mathcal{R}_{\tilde{Q}_{c}}$ is the attitude control error, $\mathcal{R}_{\tilde{Q}_{a}}$
	is the attitude auxiliary error, $e_{a}=1-\tilde{q}_{a0}$, $E_{a}=\frac{1}{2}\text{ln}\frac{\underline{\delta}_{a}+e_{a}/\xi_{a}}{\overline{\delta}_{a}-e_{a}/\xi_{a}}$
	is the transformed error, $W_{c}$ and $\beta_{a}$ are the correction
	factors, and $\dot{\Omega}_{d}$ is the derivative of the desired
	angular velocity. Additionally, $\xi_{a}$ is the PPF defined in \eqref{eq:Attit_Presc}
	with $\xi_{a}^{0}>e_{a}(0)$, and $k_{w}$, $k_{c}$, $k_{\beta}$,
	and $\underline{\delta}_{a}=\overline{\delta}_{a}>e_{a}(0)$ are positive
	constants.
	\begin{thm}
		\label{thm:Theorem2_CONT}Consider the dynamics in \eqref{eq:Attit_True_Dyn_Q}
		and the control law in \eqref{eq:Attit_Cont_Design}. Let Assumption
		\ref{Assum:Attit_desired_w} hold. Let the design parameters $k_{w}$,
		$k_{c}$, $k_{\beta}$, $\overline{\delta}_{a}=\underline{\delta}_{a}>e_{a}(0)$,
		$\xi_{a}^{0}>e_{a}(0)$, and $\xi_{a}^{\infty}$ be positive constants
		with $E_{a}(0)\in\mathcal{L}_{\infty}$. Then, 1) $E_{a}$, $e_{a}$,
		and $\Omega$ are globally bounded, and 2) starting from any initial
		conditions, all $E_{a}$, $e_{a}$, and $\tilde{\Omega}_{c}$ converge
		asymptotically to the origin with $\lim_{t\rightarrow\infty}\tilde{q}_{a}\rightarrow0_{3\times1}$,
		$\lim_{t\rightarrow\infty}\tilde{q}_{a0}\rightarrow\pm1$ $\lim_{t\rightarrow\infty}\tilde{q}_{c}\rightarrow0_{3\times1}$,
		and $\lim_{t\rightarrow\infty}\tilde{q}_{c0}\rightarrow\pm1$.
	\end{thm}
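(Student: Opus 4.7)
The plan is to mirror the strategy used for Theorem~\ref{thm:Theorem1_OBSV}: derive the error kinematics, construct a composite Lyapunov functional, bound its derivative by a negative-semi-definite quadratic, and finally invoke Barbalat's lemma. I would first differentiate $\tilde{Q}_c = Q_d^{-1}\odot Q$ along \eqref{eq:Attit_True_Dyn_Q} and \eqref{eq:Attit_Rdot_desired} to obtain $\dot{\tilde{Q}}_c = \tfrac{1}{2}\tilde{Q}_c\odot\overline{\tilde{\Omega}_c}$. Applying the same calculation to $\tilde{Q}_a = Q_a^{-1}\odot\tilde{Q}_c$ using the auxiliary dynamics in \eqref{eq:Attit_Cont_Design} yields
\[
\dot{\tilde{Q}}_a \;=\; \tfrac{1}{2}\tilde{Q}_a\odot\left[\begin{array}{c}0 \\ \tilde{\Omega}_c - \mathcal{R}_{\tilde{Q}_a}^\top\beta_a\end{array}\right],
\]
in parallel with \eqref{eq:Attit_Qo_dot}. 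The unit-quaternion identity $\mathcal{R}_{\tilde{Q}}\tilde{q}=\mathcal{R}_{\tilde{Q}}^\top\tilde{q}=\tilde{q}$ collapses $\mathcal{R}_{\tilde{Q}_a}^\top\beta_a=-k_\beta(E_a\Delta_a+1)\tilde{q}_a$, after which the scalar part combined with the PPF chain rule \eqref{eq:Attit_Trans_dot} reproduces the $\dot{E}_a$ structure of \eqref{eq:Attit_Eo_dot}.

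I would next differentiate $\tilde{\Omega}_c = \Omega - \mathcal{R}_{\tilde{Q}_c}^\top\Omega_d$ along the closed-loop trajectory and apply the skew identity \eqref{eq:Attit_S_Omo}. The feed-forward terms $[\mathcal{R}_{\tilde{Q}_c}^\top\Omega_d]_\times J\mathcal{R}_{\tilde{Q}_c}^\top\Omega_d + J\mathcal{R}_{\tilde{Q}_c}^\top\dot{\Omega}_d$ in $\tau$ are designed exactly to cancel $-J\dot{\mathcal{R}}_{\tilde{Q}_c}^\top\Omega_d - J\mathcal{R}_{\tilde{Q}_c}^\top\dot{\Omega}_d$, and the substitution $\mathcal{R}_{\tilde{Q}_o}^\top\hat{\Omega}-\mathcal{R}_{\tilde{Q}_c}^\top\Omega_d = \tilde{\Omega}_c - \tilde{\Omega}_o$ (obtained from $\mathcal{R}_{\tilde{Q}_o}^\top\hat{\Omega}=\Omega-\tilde{\Omega}_o$) reduces the closed-loop dynamics to
\[
J\dot{\tilde{\Omega}}_c \;=\; S(\Omega)\tilde{\Omega}_c - [J\tilde{\Omega}_c]_\times\tilde{\Omega}_c - W_c - k_c(\tilde{\Omega}_c - \tilde{\Omega}_o),
\]
in which $\tilde{\Omega}_o$ appears as a perturbation that vanishes asymptotically by Theorem~\ref{thm:Theorem1_OBSV}.

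I would then propose the composite Lyapunov candidate
\[
V_c \;=\; E_a^2 + (1-\tilde{q}_{a0}) + (1-\tilde{q}_{c0}) + \tfrac{1}{2k_w}\tilde{\Omega}_c^\top J\tilde{\Omega}_c,
\]
and, after substituting $W_c = -k_w(E_a\Delta_a\tilde{q}_a+\tilde{q}_c)$ and using $\tilde{\Omega}_c^\top S(\Omega)\tilde{\Omega}_c = \tilde{\Omega}_c^\top[J\tilde{\Omega}_c]_\times\tilde{\Omega}_c = 0$, observe that the cross products $E_a\Delta_a\tilde{q}_a^\top\tilde{\Omega}_c$ and $\tilde{q}_c^\top\tilde{\Omega}_c$ generated by the kinematic derivatives of $E_a^2 + (1-\tilde{q}_{a0}) + (1-\tilde{q}_{c0})$ are cancelled by those produced by $-\tilde{\Omega}_c^\top W_c/k_w$. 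What remains is a negative-semi-definite quadratic in $(\|\tilde{q}_a\|,\,E_a\Delta_a\|\tilde{q}_a\|,\,\|\tilde{q}_c\|,\,\|\tilde{\Omega}_c\|)$, together with the perturbation $(k_c/k_w)\tilde{\Omega}_c^\top\tilde{\Omega}_o$. Augmenting $V_c$ with a scaled copy of the observer Lyapunov function $\mathcal{L}_o$ from \eqref{eq:Attit_Lo} and applying Young's inequality on the perturbation absorbs it into the observer's own dissipation \eqref{eq:Attit_Lo_dot_Final}, yielding a combined Lyapunov function whose derivative is upper bounded by a negative-definite form provided $k_c,k_w,k_\beta,\delta_o,\delta_a$ satisfy explicit lower bounds analogous to those derived for $A_o$.

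Global boundedness of $E_a$, $e_a$, and $\Omega$ then follows from monotonic decrease of the augmented Lyapunov function together with Remark~\ref{rem:Attit_PPF1}, while asymptotic convergence $\tilde{q}_a,\tilde{q}_c\to 0_{3\times 1}$ (hence $\tilde{q}_{a0},\tilde{q}_{c0}\to\pm 1$ and, via Remark~\ref{rem:Attit_PFF_e_E_0}, $E_a,e_a\to 0$) and $\tilde{\Omega}_c\to 0_{3\times 1}$ follows from two applications of Barbalat's lemma exactly as in the closing steps of the proof of Theorem~\ref{thm:Theorem1_OBSV}. The main obstacle I expect is handling the simultaneous coupling among $(\tilde{q}_a,\tilde{q}_c,\tilde{\Omega}_c)$ and the observer perturbation $\tilde{\Omega}_o$: unlike the single-layer observer case where the dominance matrix $A_o$ is $3\times 3$, the analogous matrix here is effectively $4\times 4$ (or $5\times 5$ once the observer states are concatenated), and pinning down the explicit gain thresholds under which its minimum singular value remains strictly positive is the main technical burden.
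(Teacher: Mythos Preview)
Your proposal is essentially correct and follows the paper's own route: derive the $\tilde{Q}_c$, $\tilde{Q}_a$, $\dot E_a$, and $J\dot{\tilde\Omega}_c$ error dynamics exactly as you outline (the paper's equations \eqref{eq:Attit_Qc_dot}--\eqref{eq:Attit_Omc_dot} match yours), build a controller Lyapunov function, add it to the observer's $\mathcal{L}_o$, and close with Barbalat.

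There are two points of divergence worth noting. First, the paper's Lyapunov candidate is $\mathcal{L}_c=E_a^2+2(1-\tilde q_{c0})+\tfrac{1}{2k_w}\tilde\Omega_c^\top J\tilde\Omega_c$; it deliberately \emph{omits} the $(1-\tilde q_{a0})$ term you carry over from the observer template. With this choice the only sign-indefinite leftover after the $W_c$ cancellation is $\tfrac{k_c}{k_w}\|\tilde\Omega_c\|\,\|\tilde\Omega_o\|$, so the combined dominance matrix $A_\Omega$ is merely $2\times 2$ in $(\|\tilde\Omega_o\|,\|\tilde\Omega_c\|)$, with the positivity condition $\underline\lambda_o>\tfrac{k_c}{4k_w}$---far lighter than the $4\times 4$ or $5\times 5$ analysis you anticipate. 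Second, and related, the paper's $\dot{\mathcal L}_T$ bound contains \emph{no} $-\|\tilde q_c\|^2$ dissipation at all; convergence $\tilde q_c\to 0$ is obtained only at the very end, by a second Barbalat step: once $\tilde\Omega_c,\tilde\Omega_o,\tilde q_a\to 0$ one has $J\dot{\tilde\Omega}_c\to 0$, hence from \eqref{eq:Attit_Omc_dot} $W_c\to 0$, and then the structure of $W_c$ forces $\tilde q_c\to 0$. Your plan to place $\|\tilde q_c\|$ directly inside the negative quadratic is workable but unnecessary, and the cancellation you assert for the $\tilde q_c^\top\tilde\Omega_c$ cross terms does not quite hold with the coefficients you chose (both the kinematic term from $(1-\tilde q_{c0})$ and the $-\tilde\Omega_c^\top W_c/k_w$ contribution carry the same sign); the paper sidesteps this entirely with the indirect Barbalat argument.
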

	\begin{proof}Consider $\tilde{Q}_{c}=Q_{d}^{-1}\odot Q$ as in \eqref{eq:Attit_Rc_error}.
		From \eqref{eq:Attit_True_Dyn_Q} and \eqref{eq:Attit_Rdot_desired},
		one finds
		\begin{align}
		\dot{\tilde{Q}}_{c} & =\frac{1}{2}\tilde{Q}_{c}\odot(\overline{\Omega}-\tilde{Q}_{c}^{-1}\odot\overline{\Omega_{d}}\odot\tilde{Q}_{c})\nonumber \\
		& =\frac{1}{2}\tilde{Q}_{c}\odot[0,\tilde{\Omega}_{c}^{\top}]^{\top}\label{eq:Attit_Qc_dot}\\
		\dot{\mathcal{R}}_{\tilde{Q}_{c}} & =\mathcal{R}_{\tilde{Q}_{c}}[\tilde{\Omega}_{c}]_{\times}\label{eq:Attit_Rc_dot}
		\end{align}
		Hence, in view of \eqref{eq:Attit_Qo_dot} and \eqref{eq:Attit_RI2_dot},
		one finds that $\dot{\tilde{q}}_{c0}=-\frac{1}{2}\tilde{q}_{c}^{\top}\tilde{\Omega}_{c}$
		and $\dot{\tilde{q}}_{c}=\frac{1}{2}(\tilde{q}_{c0}\mathbf{I}_{3}+[\tilde{q}_{c}]_{\times})\tilde{\Omega}_{c}$.
		In the same spirit, $\dot{\tilde{Q}}_{a}=\frac{1}{2}\tilde{Q}_{c}\odot[0,(\tilde{\Omega}_{c}-\mathcal{R}_{\tilde{Q}_{a}}^{\top}\beta_{a})^{\top}]^{\top}$
		such that $\dot{\mathcal{R}}_{\tilde{Q}_{a}}=\mathcal{R}_{\tilde{Q}_{a}}\left[\tilde{\Omega}_{c}-\mathcal{R}_{\tilde{Q}_{a}}^{\top}\beta_{a}\right]_{\times}$.
		Thus, the transformed error dynamics of $E_{a}$ become
		\begin{align}
		\dot{E}_{a} & =\Delta_{a}(-\frac{1}{2}\tilde{q}_{a}^{\top}(\tilde{\Omega}_{c}-\mathcal{R}_{\tilde{Q}_{a}}^{\top}\beta_{a})-\frac{\dot{\xi}_{a}}{\xi_{a}}(1-\tilde{q}_{a0}))\label{eq:Attit_Ea_dot}
		\end{align}
		where $\Delta_{a}=\frac{1/(2\xi_{a})}{\underline{\delta}_{a}+e_{a}/\xi_{a}}+\frac{1/(2\xi_{a})}{\overline{\delta}_{a}-e_{a}/\xi_{a}}$
		as specified in \eqref{eq:Attit_mu}. Let $\tilde{\Omega}_{c}=\Omega-\mathcal{R}_{\tilde{Q}_{c}}^{\top}\Omega_{d}$
		as defined in \eqref{eq:Attit_Omc_error}. In view of the steps in
		\eqref{eq:Attit_Omo_dot} and from \eqref{eq:Attit_True_Dyn_Q}, \eqref{eq:Attit_Cont_Design},
		and \eqref{eq:Attit_Rc_dot}, one has
		\begin{align}
		J\dot{\tilde{\Omega}}_{c}= & S(\Omega)\tilde{\Omega}_{c}-[J\tilde{\Omega}_{c}]_{\times}\tilde{\Omega}_{c}-W_{c}-k_{c}(\tilde{\Omega}_{c}-\tilde{\Omega}_{o})\label{eq:Attit_Omc_dot}
		\end{align}
		with $[J\Omega]_{\times}\Omega+J[\tilde{\Omega}_{c}]_{\times}\mathcal{R}_{\tilde{Q}_{c}}^{\top}\Omega_{d}+[\mathcal{R}_{\tilde{Q}_{c}}^{\top}\Omega_{d}]_{\times}J\mathcal{R}_{\tilde{Q}_{c}}^{\top}\Omega_{d}=S(\Omega)\tilde{\Omega}_{c}-[J\tilde{\Omega}_{c}]_{\times}\tilde{\Omega}_{c}$,
		$S(\Omega)=[J\Omega]_{\times}-J[\Omega]_{\times}-[\Omega]_{\times}J$
		being a skew symmetric matrix, see \eqref{eq:Attit_S_Omo}. Consider
		the following Lyapunov function candidate
		\begin{equation}
		\mathcal{L}_{c}=E_{a}^{2}+2(1-\tilde{q}_{c0})+\frac{1}{2k_{w}}\tilde{\Omega}_{c}^{\top}J\tilde{\Omega}_{c}\label{eq:Attit_Lc}
		\end{equation}
		From \eqref{eq:Attit_Qc_dot}, \eqref{eq:Attit_Ea_dot}, and \eqref{eq:Attit_Omc_dot},
		and replacing $W_{c}$ and $\beta_{a}$ with their definitions in
		\eqref{eq:Attit_Cont_Design} one finds the time derivative of $\mathcal{L}_{c}$
		in \eqref{eq:Attit_Lc} as follows:
		\begin{align}
		& \dot{\mathcal{L}}_{c}=-E_{a}\Delta_{a}\tilde{q}_{a}^{\top}(\tilde{\Omega}_{c}-\mathcal{R}_{\tilde{Q}_{a}}^{\top}\beta_{a})-\tilde{q}_{c}^{\top}\tilde{\Omega}_{c}\nonumber \\
		& \hspace{1em}-\frac{2E_{a}\Delta_{a}\dot{\xi}_{a}}{\xi_{a}}(1-\tilde{q}_{a0})\nonumber \\
		& \hspace{1em}+\frac{1}{k_{w}}\tilde{\Omega}_{c}^{\top}(S(\Omega)\tilde{\Omega}_{c}-[J\tilde{\Omega}_{c}]_{\times}\tilde{\Omega}_{c}-W_{c}-k_{c}(\tilde{\Omega}_{c}-\tilde{\Omega}_{o}))\nonumber \\
		& \leq-k_{\beta}E_{a}^{2}\Delta_{a}^{2}||\tilde{q}_{a}||^{2}-\frac{k_{c}}{k_{w}}||\tilde{\Omega}_{c}||^{2}+\frac{k_{c}}{k_{w}}||\tilde{\Omega}_{c}||\,||\tilde{\Omega}_{o}||\label{eq:Attit_Lc_dot-2}
		\end{align}
		where $2E_{a}\Delta_{a}\frac{\dot{\xi}_{a}}{\xi_{a}}(1-\tilde{q}_{a0})\leq2\ell_{a}E_{a}\Delta_{a}||\tilde{q}_{a}||^{2}$
		and $k_{\beta}$ is selected such that $k_{\beta}\geq2\ell_{a}$.
		According to the skew symmetric definition, $\tilde{\Omega}_{c}^{\top}S(\Omega)\tilde{\Omega}_{c}=0$
		and $\tilde{\Omega}_{c}^{\top}[J\tilde{\Omega}_{c}]_{\times}\tilde{\Omega}_{c}=0$.
		Let us combine \eqref{eq:Attit_Lo} and \eqref{eq:Attit_Lc} to obtain
		the following Lyapunov function candidate: 
		\begin{equation}
		\mathcal{L}_{T}=\mathcal{L}_{o}+\mathcal{L}_{c}\label{eq:Attit_LT}
		\end{equation}
		From \eqref{eq:Attit_Lo_dot_Final} and \eqref{eq:Attit_Lc_dot-2},
		one finds
		\begin{align}
		\dot{\mathcal{L}}_{T}\leq & -\underline{\lambda}_{o}(E_{o}^{2}\Delta_{o}^{2}+1)||\tilde{q}_{o}||^{2}-k_{\beta}E_{a}^{2}\Delta_{a}^{2}||\tilde{q}_{a}||^{2}\nonumber \\
		& -\varepsilon_{\Omega}^{\top}\underbrace{\left[\begin{array}{cc}
			\underline{\lambda}_{o} & 0.5k_{c}/k_{w}\\
			0.5k_{c}/k_{w} & k_{c}/k_{w}
			\end{array}\right]}_{A_{\Omega}}\varepsilon_{\Omega}\label{eq:Attit_LT_dot}
		\end{align}
		where $\varepsilon_{\Omega}=[||\tilde{\Omega}_{o}||,||\tilde{\Omega}_{c}||]^{\top}$.
		$A_{\Omega}$ can be made positive by selecting $\underline{\lambda}_{o}>\frac{k_{c}}{4k_{w}}$.
		Let $\underline{\lambda}_{c}=\underline{\lambda}(A_{\Omega})$ be
		the minimum eigenvalue of $A_{\Omega}$. One obtains 
		\begin{align}
		\dot{\mathcal{L}}_{T}\leq & -\underline{\lambda}_{o}(E_{o}^{2}\Delta_{o}^{2}+1)||\tilde{q}_{o}||^{2}-k_{\beta}E_{a}^{2}\Delta_{a}^{2}||\tilde{q}_{a}||^{2}-\underline{\lambda}_{c}||\tilde{\Omega}_{o}||\nonumber \\
		& -\underline{\lambda}_{c}||\tilde{\Omega}_{c}||\label{eq:Attit_LT_dot_Final}
		\end{align}
		The inequality in \eqref{eq:Attit_LT_dot_Final} shows that $\ddot{\mathcal{L}}_{T}$
		is bounded and $\dot{\mathcal{L}}_{T}$ goes to zero proving Theorem
		\ref{thm:Theorem2_CONT} in addition to $\tilde{q}_{a}\rightarrow0_{3\times1}$
		and $\tilde{\Omega}_{c}\rightarrow0_{3\times1}$. Since $\underline{\delta}_{a}=\overline{\delta}_{a}$,
		$E_{a}\neq0$ for all $e_{a}\neq0$ and $E_{a}=0$ only at $e_{a}=0$.
		Thus, $e_{a},E_{a}\rightarrow0$ which implies that $\tilde{q}_{a}\rightarrow0_{3\times1}$
		and $\tilde{q}_{a0}\rightarrow\pm1$. Based on Barbalat Lemma, $\ddot{\tilde{\Omega}}_{c}$
		and $\ddot{\tilde{Q}}_{a}$ are bounded with $\dot{\tilde{\Omega}}_{c}\rightarrow0_{3\times1}$
		and $\dot{\tilde{Q}}_{a}\rightarrow0_{3\times1}$. From \eqref{eq:Attit_Qc_dot},
		$\tilde{\Omega}_{c}\rightarrow0_{3\times1}$ indicates that $\dot{\tilde{Q}}_{c}\rightarrow0_{4\times1}$.
		Moreover, from \eqref{eq:Attit_Omc_dot}, $J\dot{\tilde{\Omega}}_{c}\rightarrow0_{3\times1}$
		shows that $W_{c}\rightarrow0_{3\times1}$ which shows that $\tilde{q}_{c}\rightarrow0_{3\times1}$
		and, in turn, $\tilde{q}_{c0}\rightarrow\pm1$ completing the proof.\end{proof}
	
	Let $\Delta t$ be a small sample time. Algorithm \ref{alg:Alg_Disc0}
	lists the complete implementation steps of a discrete form of the
	proposed quaternion observer-based controller with guaranteed performance.
	
	\begin{algorithm}
		\caption{\label{alg:Alg_Disc0}Unit-quaternion observer-based controller with
			systematic convergence}
		
		\textbf{Initialization}:
		\begin{enumerate}
			\item[{\footnotesize{}1:}] Set $\hat{Q}_{0}=\hat{Q}[0],Q_{a}[0]\in\mathbb{S}^{3}$, and $\hat{\Omega}_{0}=\hat{\Omega}[0],\tau_{0}=\tau[0]\in\mathbb{R}^{3}$
			\item[{\footnotesize{}2:}] Start with $k=0$ and select $k_{o}$, $\gamma_{o}$, $\xi_{o}^{\infty}$,
			$\xi_{o}^{0},\underline{\delta}_{o}=\overline{\delta}_{o}>1$, $k_{w}$,
			$k_{c}$, $\xi_{c}^{\infty}$, and $\xi_{c}^{0},\underline{\delta}_{c}=\overline{\delta}_{c}>1$
		\end{enumerate}
		\textbf{while }(1)\textbf{ do}
		\begin{enumerate}
			\item[{\footnotesize{}3:}] Use $\mathfrak{r}_{i}[k]$ and $\mathfrak{b}_{i}[k]$ in \eqref{eq:Attit_Vector_norm}
			to reconstruct $Q_{y}[k]$, visit \cite{hashim2020AtiitudeSurvey}
			\item[{\footnotesize{}4:}] $\tilde{Q}_{o}=\hat{Q}_{k}^{-1}\odot Q_{y}[k]$, $\tilde{Q}_{c}=Q_{d}^{-1}[k]\odot\hat{Q}_{k}$,
			and $\tilde{Q}_{a}=Q_{a}^{-1}[k]\odot\tilde{Q}_{c}$
			\item[{\footnotesize{}5:}] $\hat{\tau}=\mathcal{R}_{\tilde{Q}_{o}}\tau$
			\item[{\footnotesize{}6:}] $\xi_{\star}[k]=(\xi_{\star}^{0}-\xi_{\star}^{\infty})\exp(-\ell_{\star}k\Delta t)+\xi_{\star}^{\infty}$,
			\textcolor{blue}{/{*} $\star=o,a$ {*}/}
			\item[{\footnotesize{}7:}] $e_{\star}=1-\tilde{q}_{\star0}$, \textcolor{blue}{/{*} $\star=o,a$
				{*}/}
			\item[{\footnotesize{}8:}] \textbf{if} $\tilde{q}_{\star0}<0$ \textbf{then}\vspace{1mm}
			\item[{\footnotesize{}9:}] \hspace{0.5cm}\hspace{0.5cm}$e_{\star}=1+\tilde{q}_{\star0}$\vspace{1mm}
			\item[{\footnotesize{}10:}] \textbf{end if}
			\item[{\footnotesize{}11:}] \textbf{if} $e_{\star}>\xi_{\star}[k]$ \textbf{then}\vspace{1mm}
			\item[{\footnotesize{}12:}] \hspace{0.5cm}\hspace{0.5cm}$\xi_{\star}[k]=e_{\star}+\epsilon$,\hspace{0.5cm}
			/{*} $\epsilon$ is a small constant {*}/\vspace{1mm}
			\item[{\footnotesize{}13:}] \textbf{end if}
			\item[{\footnotesize{}14:}] $E_{\star}=\frac{1}{2}\text{ln}\frac{\overline{\delta}_{\star}+e_{\star}/\xi_{\star}}{\overline{\delta}_{\star}-e_{\star}/\xi_{\star}}$
			\item[{\footnotesize{}15:}] $\Delta_{\star}=\frac{1/(2\xi_{\star})}{\overline{\delta}_{\star}+e_{\star}/\xi_{\star}[k]}+\frac{1/(2\xi_{\star})}{\overline{\delta}_{\star}-e_{\star}/\xi_{\star}[k]}$
			\item[{\footnotesize{}16:}] $W_{\Omega}=-k_{o}(E_{o}\Delta_{o}+1)\mathcal{R}_{\tilde{Q}_{o}}\tilde{q}_{o}$
			\item[{\footnotesize{}17:}] $W_{\tau}=-\gamma_{o}(E_{o}\Delta_{o}+1)\mathcal{R}_{\tilde{Q}_{o}}\tilde{q}_{o}$
			\item[{\footnotesize{}18:}] $\hat{Q}_{k+1}=\exp(\frac{1}{2}\Gamma(\hat{\Omega}+W_{\Omega})\Delta t)\hat{Q}_{k}$
			\item[{\footnotesize{}19:}] $\hat{\Omega}_{k+1}=\hat{\Omega}_{k}$
			\item[] $\hspace{3em}+\Delta t\hat{J}^{-1}([\hat{J}\hat{\Omega}]_{\times}\hat{\Omega}+\hat{\tau}+\hat{J}[\hat{\Omega}]_{\times}W_{\Omega}+W_{\tau})$
			\item[{\footnotesize{}20:}] $\beta_{a}=-k_{\beta}(E_{a}\Delta_{a}+1)\mathcal{R}_{\tilde{Q}_{a}}^{\top}\tilde{q}_{a}$
			\item[{\footnotesize{}21:}] $Q_{a}[k+1]=\exp(\frac{1}{2}\Gamma(\beta_{a})\Delta t)Q_{a}[k]$
			\item[{\footnotesize{}22:}] $W_{c}=-k_{w}(E_{a}\Delta_{a}\tilde{q}_{a}+\tilde{q}_{c})$
			\item[{\footnotesize{}23:}] $\tau[k]=-W_{c}-k_{c}(\mathcal{R}_{\tilde{Q}_{o}}^{\top}\hat{\Omega}-\mathcal{R}_{\tilde{Q}_{c}}^{\top}\Omega_{d})$\vspace{0.1em}
			\item[] $\hspace{4em}+[\mathcal{R}_{\tilde{Q}_{c}}^{\top}\Omega_{d}]_{\times}J\mathcal{R}_{\tilde{Q}_{c}}^{\top}\Omega_{d}+J\mathcal{R}_{\tilde{Q}_{c}}^{\top}\dot{\Omega}_{d}$
			\item[{\footnotesize{}24:}] $k=k+1$
		\end{enumerate}
		\textbf{end while}
	\end{algorithm}

	\section{Numerical Results \label{sec:SE3_Simulations}}
	
	This section reveals the robustness of of the novel guaranteed performance
	discrete quaternion observer-based controller described in Algorithm
	\ref{alg:Alg_Disc0} at low sampling rate of 200 Hz. Let $r_{1}=\left[1,1.2,1.3\right]^{\top}$
	and $r_{2}=\left[0,0,1\right]^{\top}$ be two non-collinear inertial-frame
	vectors with body-frame measured values being corrupted by zero-mean
	noise with a standard deviation of $0.08$, visit \eqref{eq:Attit_Vect_True}.
	Based on Remark \ref{rem:Rem1_Attit_Rec}, $\mathfrak{r}_{i}=r_{i}/||r_{i}||$
	and $\mathfrak{b}_{i}=b_{i}/||b_{i}||$ for $i=1,2$ with $\mathfrak{r}_{3}=\mathfrak{r}_{1}\times\mathfrak{r}_{2}$
	and $\mathfrak{b}_{3}=\mathfrak{b}_{1}\times\mathfrak{b}_{2}$. Let
	$\Omega(0)=[0.2,0.3,0.3]^{\top}$ and its initial estimate be $\hat{\Omega}(0)=[0,0,0]^{\top}$.
	In order to account for a large initial error of the unit-quaternion
	between $Q[0]$ and $\hat{Q}[0]$ and between $Q[0]$ and $Q_{d}[0]$,
	consider $Q[0]=[0.0087,0.3906,0.1302,0.9113]^{\top}$, $Q_{d}[0]=Q_{a}[0]=\hat{Q}[0]=[1,0,0,0]^{\top}$.
	Assume that the rigid-body's inertia matrix is $J={\rm diag}(0.016,0.015,0.03)$.
	Let the time derivative of the desired angular velocity be
	\[
	\dot{\Omega}_{d}=\left[\begin{array}{c}
	0.03\sin(0.3t+\pi/4)\\
	0.05\sin(0.4t+\pi/3)\\
	0.02\sin(0.2t+\pi/2)
	\end{array}\right]\,{\rm rad/sec}^{2}
	\]
	Select the design parameters as follows: $\xi_{o}^{0}=\xi_{a}^{0}=\overline{\delta}_{o}=\underline{\delta}_{o}=\overline{\delta}_{a}=\underline{\delta}_{a}=1.7$,
	$\xi_{o}^{\infty}=\xi_{a}^{\infty}=0.05$, $\ell_{o}=\ell_{a}=1$,
	$k_{o}=10$, $k_{w}=1$, and $\gamma_{o}=k_{c}=k_{\beta}=0.1$.
	
	For simplicity in demonstration, Fig. \ref{fig:BFV} shows the third
	component of the body-frame measurements with respect to the true
	values. Despite high noise level in the body-frame measurements of
	an IMU module illustrated by Fig. \ref{fig:BFV}, Fig. \ref{fig:Quat}
	reveals impressive tracking capabilities in case of large initialization
	error and fast maneuvering. The robustness and fast adaptation of
	the proposed approach are confirmed in Fig. \ref{fig:Qe} where attitude
	tracking errors are successfully regulated to the desired equilibrium
	point $\mathbf{Q}_{{\rm I}}=[\pm1,0,0,0]$ and angular velocity tracking
	errors are regulated to the origin. Furthermore, Fig. \ref{fig:Om_Tau}
	demonstrates the boundedness of the control signal. Note that unit-quaternion
	is subject to non-uniqueness such that for $Q_{1}=-Q_{2}\in\mathbb{S}^{3}$,
	$\mathcal{R}_{Q_{1}}=\mathcal{R}_{Q_{2}}\in\mathbb{SO}\left(3\right)$.
	Only for plotting purposes, if $\tilde{q}_{c0}\rightarrow\pm1$ and
	$\tilde{q}_{o0}\rightarrow\mp1$, multiply $\hat{Q}$ by $-1$ to
	end with $\tilde{q}_{o0}\rightarrow\pm1$.
	
	\begin{figure}[h!]
		\centering{}\includegraphics[scale=0.31]{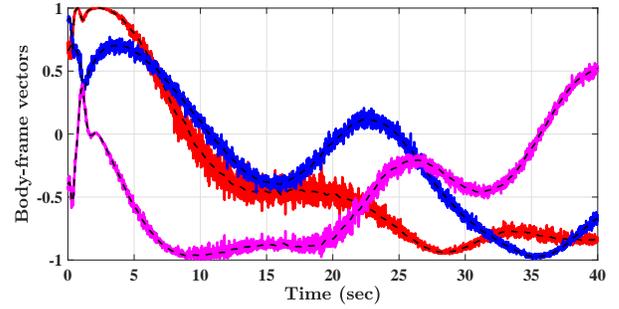}\caption{Body-frame vectors: Measurements (red, blue, and magenta solid-lines)
			vs true (black dashed-line).}
		\label{fig:BFV}
	\end{figure}

	\begin{figure}[h!]
		\centering{}\includegraphics[scale=0.29]{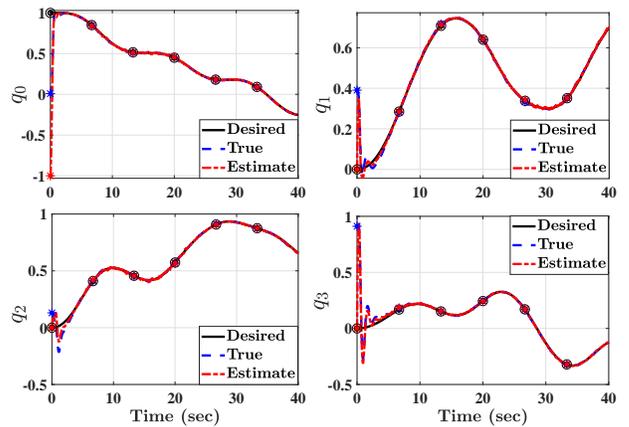}\caption{Unit-quaternion: desired $Q_{d}$, true $Q$, and estimated $\hat{Q}$.}
		\label{fig:Quat}
	\end{figure}
	
	\begin{figure}[h!]
		\centering{}\includegraphics[scale=0.32]{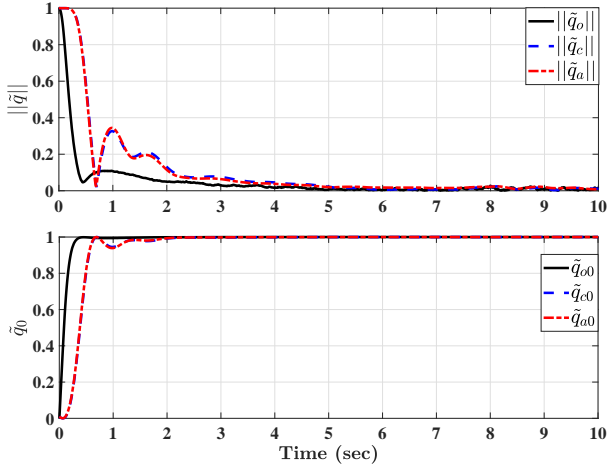}\caption{Errors in unit-quaternion: $\tilde{Q}_{o}$, $\tilde{Q}_{c}$, and
			$\tilde{Q}_{a}$.}
		\label{fig:Qe}
	\end{figure}
	
	\begin{figure}[h!]
		\centering{}\includegraphics[scale=0.3]{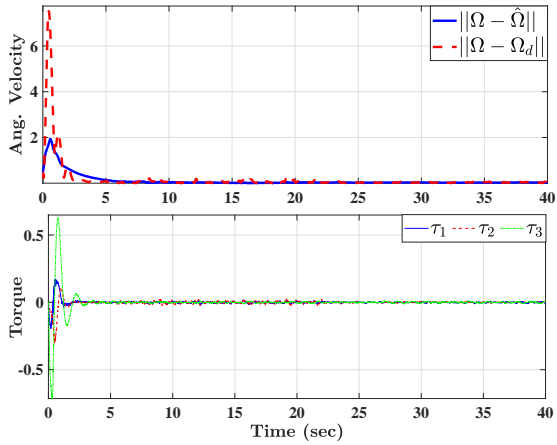}\caption{Control input (torque) and errors in angular velocity.}
		\label{fig:Om_Tau}
	\end{figure}

	\section{Conclusion \label{sec:SE3_Conclusion}}
	
	This paper addressed the challenge of velocity-free attitude tracking
	performed based solely on measurements obtained from low-cost inertial
	measurement units. A novel computationally cheap unit-quaternion observer-based
	controller ensuring almost global asymptotic stability of the overall
	closed loop signals has been proposed. Additionally, the transient
	and steady-state performance of the attitude tracking error has been
	shown to follow the dynamically reducing boundaries predefined by
	the user. Simulation results demonstrated high robustness and fast
	adaptation at a low sampling rate.
	
	\section*{Acknowledgment}
	
	The author would like to thank \textbf{Maria Shaposhnikova} for proofreading
	the article.

	\bibliographystyle{IEEEtran}
	\bibliography{bib_Attit_Cont}
\end{document}